\newtheorem{theorem}{Theorem}[section]
\newtheorem{proposition}[theorem]{Proposition}
\newtheorem{corollary}[theorem]{Corollary}
\theoremstyle{definition}
\newtheorem{definition}[theorem]{Definition}
\theoremstyle{remark}
\newtheorem*{remark}{Remark}
\title{Curvature Atlas II: Geometric Classification of Integrable Rigid--Body Regimes}
\author{Evgeny A.~Mityushov}
\date{}
\begin{document}

\maketitle

\begin{abstract}
This paper is the second part of a curvature--based program for rigid--body dynamics on $SU(2)$.
In Part I, ``Curvature--Driven Dynamics on $S^3$: A Geometric Atlas'', we introduced the inertial curvature field $K_{\mathrm{geo}}$ associated with a left--invariant metric on $SU(2)$, constructed a geometric Atlas of curvature regimes, and identified the inertia ratio $(2{:}2{:}1)$ as a curvature--balanced regime giving rise to a pure--precession family for the heavy top, building on the dynamical discovery of this regime in \cite{Mityushov221}.

Here we develop the curvature Atlas into a classification tool for integrable and near--integrable rigid--body regimes. We show that all classical integrable heavy--top cases (Euler, Lagrange, Kovalevskaya, Goryachev--Chaplygin) correspond to specific degenerate curvature signatures of $K_{\mathrm{geo}}$. In each case the inertia tensor imposes a simple algebraic structure on the inertial curvature field: vanishing of one component, symmetric curvature pairs, or orthogonal curvature splitting. We formulate and prove a curvature classification theorem describing these signatures and their relation to integrability.

We then single out the mixed anisotropic ratio $(2{:}2{:}1)$ as a minimally nondegenerate curvature--balanced regime: it destroys algebraic integrability while preserving an exact curvature balance, giving rise to pure precession for the heavy top. Finally, we introduce a curvature deviation functional measuring the distance to the nearest integrable curvature signature, describe near--integrable regimes in a neighbourhood of $(2{:}2{:}1)$, and present an integrability map in the $(I_2/I_1,I_3/I_1)$--plane.
\end{abstract}

\section{Introduction}

The dynamics of a rigid body on $SO(3)$ or $SU(2)$ has long served as a
central testing ground for geometric and analytic ideas; see, for example,
Arnold~\cite{Arnold}, Bolsinov--Fomenko~\cite{BolsinovFomenko}, and the classical papers of
Kovalevskaya and Chaplygin~\cite{Kovalevskaya,Chaplygin}. On the geometric side,
Milnor's study of curvatures of left--invariant metrics on Lie groups~\cite{Milnor}
provides a structural description of possible curvature signatures, but does not
directly address integrability of the associated rigid--body dynamics.

In recent work the author proposed a curvature--based framework for rigid--body
dynamics on $SU(2)$. In \cite{Mityushov221} a new dynamical regime for the heavy top
with inertia ratio $(2{:}2{:}1)$ was identified: despite the lack of algebraic
integrability, the system admits a nontrivial family of pure--precession trajectories.
In Part~I of the present series, ``Curvature--Driven Dynamics on $S^3$: A Geometric Atlas'',
we introduced the inertial curvature field $K_{\mathrm{geo}}$ associated with a left--invariant
metric on $SU(2)$, constructed a geometric Atlas of curvature regimes, and interpreted
the pure--precession family from \cite{Mityushov221} as arising from an exact balance between
inertial and external curvature fields.

The present paper develops this picture into a curvature classification of integrable and
near--integrable rigid--body regimes. The guiding observation is that classical integrable
heavy--top cases correspond to simple algebraic degeneracies of the inertial curvature field.
We show that the Euler, Lagrange, Kovalevskaya, and Goryachev--Chaplygin tops exhaust the
curvature--degenerate patterns for left--invariant metrics on $SU(2)$: in each case the inertia
ratios enforce a particularly simple structure on $K_{\mathrm{geo}}$, such as vanishing of one
component, a symmetric curvature pair, or an orthogonal splitting. We formulate a curvature
classification theorem describing these signatures and use it to reinterpret the classical list
of integrable cases.

We then return to the mixed anisotropic ratio $(2{:}2{:}1)$, which plays a central role in
Part~I. From the curvature viewpoint this ratio is minimally nondegenerate: it destroys the
classical curvature degeneracies underlying integrability while still admitting an exact
curvature balance with the external field. In this sense $(2{:}2{:}1)$ marks the first
nonintegrable but geometrically organized regime in the Atlas.

Finally, we introduce a curvature deviation functional measuring the distance to the nearest
integrable curvature signature and describe near--integrable regimes in a neighbourhood of
$(2{:}2{:}1)$. This yields a simple geometric notion of ``distance from integrability'' and
an integrability map in the $(I_2/I_1,I_3/I_1)$--plane.

\section{Curvature Signatures for Left--Invariant Metrics on $SU(2)$}

We briefly recall the inertial curvature field introduced in Part~I and
fix notation. Let $I=\mathrm{diag}(I_1,I_2,I_3)$ be a positive definite inertia tensor in a
body frame aligned with the principal axes. A left--invariant metric on $SU(2)$ is defined by
declaring this frame orthonormal with respect to the kinetic energy inner product
\[
\langle \Omega,\Omega\rangle = \tfrac12 (I_1\omega_1^2 + I_2\omega_2^2 + I_3\omega_3^2),
\qquad \Omega = (\omega_1,\omega_2,\omega_3)\in\mathbb{R}^3.
\]
The geodesic equation takes the Euler--Poincar\'e form
\[
\dot{\Omega} = I^{-1}(I\Omega \times \Omega).
\]
Following Part~I, we define the inertial curvature field
\[
K_{\mathrm{geo}}(\Omega) = I^{-1}(I\Omega \times \Omega),
\]
viewed as a quadratic vector field on $\mathbb{R}^3$.

Writing $\Omega=(\omega_1,\omega_2,\omega_3)$, we obtain the standard expression
\begin{equation}
\label{eq:Kgeo-general}
K_{\mathrm{geo}}(\Omega) =
\left(
\frac{I_2-I_3}{I_1}\,\omega_2\omega_3,\;
\frac{I_3-I_1}{I_2}\,\omega_1\omega_3,\;
\frac{I_1-I_2}{I_3}\,\omega_1\omega_2
\right).
\end{equation}
In Part~I this field was used to organize a geometric Atlas of curvature regimes on $SU(2)$.
Here we focus on the algebraic structure of $K_{\mathrm{geo}}$ as a function of the inertia
ratios and introduce the following notion.

\begin{definition}[Curvature signature]
Let $I=\mathrm{diag}(I_1,I_2,I_3)$ and let $K_{\mathrm{geo}}$ be given by
\eqref{eq:Kgeo-general}. The \emph{curvature signature} of the corresponding
left--invariant metric is the ordered triple of quadratic forms
\[
\left(K_{\mathrm{geo},1},K_{\mathrm{geo},2},K_{\mathrm{geo},3}\right)
\]
considered up to an overall nonzero scalar factor. We say that the curvature signature is:
\begin{itemize}
\item \emph{isotropic} if $K_{\mathrm{geo}}\equiv 0$;
\item \emph{orthogonally split} if exactly one component vanishes identically;
\item \emph{symmetric pair} if two nonzero components form a symmetric pair and the third vanishes;
\item \emph{balanced mixed} if precisely two components are nonzero and form a skew pair
whose coefficients have equal magnitude after normalization;
\item \emph{generic anisotropic} otherwise.
\end{itemize}
\end{definition}

The isotropic, orthogonally split, symmetric--pair, and balanced--mixed signatures are precisely the
patterns that appear in the classical integrable cases and in the $(2{:}2{:}1)$ regime.
We record the corresponding algebraic relations.

\begin{proposition}[Curvature signatures for classical ratios]
\label{prop:signatures}
Let $I=\mathrm{diag}(I_1,I_2,I_3)$ and $K_{\mathrm{geo}}$ be given by
\eqref{eq:Kgeo-general}.
\begin{enumerate}
\item \textbf{Spherical case:} $I_1=I_2=I_3$ if and only if $K_{\mathrm{geo}}\equiv 0$
(isotropic signature).

\item \textbf{Lagrange case:} $I_1=I_2\neq I_3$ if and only if
\[
K_{\mathrm{geo}}(\Omega) =
\left(
\frac{I_1-I_3}{I_1}\,\omega_2\omega_3,\;
\frac{I_3-I_1}{I_1}\,\omega_1\omega_3,\;
0
\right),
\]
so that exactly one component vanishes and the curvature signature is orthogonally split.

\item \textbf{Kovalevskaya case:} $I_1=I_2=2I_3$ if and only if
\[
K_{\mathrm{geo}}(\Omega) =
\left(
\omega_2\omega_3,\;
-\,\omega_1\omega_3,\;
0
\right)
\]
up to overall scaling, with a symmetric pair in the $(\omega_1,\omega_2)$--plane and vanishing third
component.

\item \textbf{Goryachev--Chaplygin case:} $I_1=I_2=4I_3$ if and only if
\[
K_{\mathrm{geo}}(\Omega) =
\left(
2\omega_2\omega_3,\;
-2\omega_1\omega_3,\;
0
\right)
\]
up to scaling, i.e.\ a scaled symmetric pair of the Kovalevskaya type.

\item \textbf{Mixed $(2{:}2{:}1)$ case:} $I_1:I_2:I_3=2:2:1$ if and only if, after
normalization,
\[
K_{\mathrm{geo}}(\Omega) =
\left(
\tfrac12 \,\omega_2\omega_3,\;
-\tfrac12 \,\omega_1\omega_3,\;
0
\right),
\]
a balanced--mixed curvature signature with two nonzero components of equal magnitude and
one vanishing component.

\item \textbf{Generic case:} If $I_1,I_2,I_3$ are pairwise distinct and do not satisfy
any of the above relations, then all three components of $K_{\mathrm{geo}}$ are
nonzero and the curvature signature is generic anisotropic.
\end{enumerate}
\end{proposition}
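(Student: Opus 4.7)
The proposition reduces to an algebraic classification of the coefficient triple
\[
(a_1, a_2, a_3) = \left(\frac{I_2-I_3}{I_1},\; \frac{I_3-I_1}{I_2},\; \frac{I_1-I_2}{I_3}\right)
\]
from \eqref{eq:Kgeo-general}, since the three quadratic monomials $\omega_2\omega_3$, $\omega_1\omega_3$, $\omega_1\omega_2$ are linearly independent in $\mathbb{R}[\omega_1,\omega_2,\omega_3]$. Equality of $K_{\mathrm{geo}}$ as a quadratic vector field is therefore equivalent to equality of the coefficient triples, and each of the six claims becomes a purely algebraic statement about $(a_1,a_2,a_3)$ as a function of the inertia ratios.

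My plan is to verify each case by direct substitution in the forward direction and by coefficient--matching in the reverse direction. Case (1) is immediate: $a_1=a_2=a_3=0$ iff $I_1=I_2=I_3$. Cases (2)--(5) share the common relation $I_1=I_2$, which forces $a_3=0$ and $a_2=-a_1=-(I_1-I_3)/I_1$, producing the Lagrange orthogonal splitting (case 2) and, in the first two components, a symmetric skew pair. The additional numerical relations $I_1=2I_3$ and $I_1=4I_3$ then specialize $a_1$ to $1/2$ and $3/4$ respectively, which after a single global rescaling give the normalized triples $(1,-1,0)$ and $(2,-2,0)$ recorded in the statement for Kovalevskaya (case 3), the mixed $(2{:}2{:}1)$ regime (case 5), and Goryachev--Chaplygin (case 4). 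For case (6), pairwise distinctness of the $I_i$ together with the failure of each of the earlier relations forces all three numerators $I_2-I_3$, $I_3-I_1$, $I_1-I_2$ to be nonzero, so $K_{\mathrm{geo}}$ is generic anisotropic by definition.

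The main point requiring care, and the place I expect the argument to need an explicit convention, is the ``up to overall scaling'' qualifier in cases (3)--(5). The projective equivalence class $[1:-1:0]$ of $(a_1,a_2,a_3)$ is in fact shared by every Lagrange ratio $I_1=I_2\neq I_3$, so one cannot recover the specific numerical ratio $I_1/I_3$ from the projective class alone. In the reverse implications of cases (3)--(5) I will therefore work with the \emph{unscaled} coefficient $a_1=(I_1-I_3)/I_1$: fixing it to its literal proposition value (namely $1/2$ for cases (3) and (5), and $3/4$ for case (4)) recovers the numerical ratio $I_1/I_3$ by a one--line computation. With this convention in place, each of the six equivalences reduces to a direct verification, and no deeper input beyond linear independence of the three quadratic monomials is required.
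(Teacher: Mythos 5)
Your approach---reducing everything to the coefficient triple $(a_1,a_2,a_3)$ and invoking linear independence of the monomials $\omega_2\omega_3$, $\omega_1\omega_3$, $\omega_1\omega_2$---is the same direct-substitution argument the paper itself gives, and your forward directions all check out. Where you go beyond the paper is in the reverse implications, and your diagnosis there is correct and important: the paper's proof only performs the substitutions and never addresses the ``only if'' halves, and as literally stated with the ``up to overall scaling'' qualifier, cases (2)--(5) cannot all be biconditionals, since every ratio with $I_1=I_2\neq I_3$ yields the same projective class $[1:-1:0]$ of coefficient triples. Your repair---pin down the unscaled, scale-invariant coefficient $a_1=(I_1-I_3)/I_1=1-I_3/I_1$, which then determines $I_3/I_1$ by a one-line computation---is the right one. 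Two points need tightening, however. First, the values $1/2$ and $3/4$ you quote are the \emph{computed} unscaled coefficients, not the ``literal proposition values'': the proposition displays coefficient $1$ in case (3) and $2$ in case (4), so your convention implicitly requires rewriting those displayed formulas (or declaring them to be already rescaled), and this should be said explicitly. Second, cases (3) and (5) concern the \emph{identical} inertia ratio ($I_1=I_2=2I_3$ is exactly $2{:}2{:}1$), hence the identical unscaled coefficient $a_1=\tfrac12$, so no convention on $K_{\mathrm{geo}}$ alone can separate them; the Kovalevskaya versus mixed distinction in the classical theory lives in the centre-of-mass data, not in the inertial curvature field. You tacitly concede this by grouping (3) and (5) together, but it deserves an explicit remark, since it means the proposition's case division is not a partition of curvature signatures and the ``if and only if'' in at least one of those two items cannot hold as stated.
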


\begin{proof}
Each item is a direct substitution into \eqref{eq:Kgeo-general}. For example, in the spherical case
$I_1=I_2=I_3$ makes all coefficients vanish; in the Lagrange case $I_1=I_2$ implies that the
third coefficient $(I_1-I_2)/I_3$ vanishes while the first two are nonzero unless $I_1=I_3$;
the Kovalevskaya and Goryachev--Chaplygin formulas are obtained by substituting $I_1=I_2=2I_3$
and $I_1=I_2=4I_3$ and renormalizing; the $(2{:}2{:}1)$ case is a particular instance of an
anisotropic metric with $I_1=I_2\neq I_3$. The generic statement follows from the fact that
\eqref{eq:Kgeo-general} has three nonzero coefficients whenever $I_1,I_2,I_3$ are pairwise distinct
and no linear relation of the above types holds.
\end{proof}

\section{Curvature Classification of Integrable Regimes}

We now formulate the main result: a curvature classification of the classical integrable
rigid--body regimes. We treat the geodesic case on $SU(2)$ and the heavy top in parallel.
Throughout we consider the classical heavy top with inertia tensor $I$, centre of mass
vector $\mu$ in the body frame, and Euler--Poisson equations
\[
I\dot{\Omega} = I\Omega\times\Omega + \mu\times\Gamma,\qquad
\dot{\Gamma} = \Gamma\times\Omega,
\]
as in the standard references~\cite{Arnold,BolsinovFomenko,Chaplygin,Kovalevskaya}.

\subsection{Statement of the classification theorem}

We first state the curvature classification in geometric form and then relate it to
integrability.

\begin{theorem}[Curvature classification of rigid--body regimes]
\label{thm:curv-class}
Let $I=\mathrm{diag}(I_1,I_2,I_3)$ define a left--invariant metric on $SU(2)$, and let
$K_{\mathrm{geo}}$ be its inertial curvature field. Then:
\begin{enumerate}
\item The spherical inertia ratio $I_1=I_2=I_3$ is characterized by the isotropic signature
$K_{\mathrm{geo}}\equiv 0$.

\item The Lagrange inertia ratio $I_1=I_2\neq I_3$ is characterized by an orthogonally split
signature: exactly one component of $K_{\mathrm{geo}}$ vanishes identically and the remaining two
span a two--dimensional curvature subspace.

\item The Kovalevskaya inertia ratio $I_1=I_2=2I_3$ and the Goryachev--Chaplygin ratio
$I_1=I_2=4I_3$ are characterized by symmetric--pair signatures: two nonzero components form a symmetric
pair in the equatorial plane, while the third component vanishes.

\item The mixed inertia ratio $I_1:I_2:I_3 = 2:2:1$ is characterized by a balanced--mixed
signature: exactly two components of $K_{\mathrm{geo}}$ are nonzero and have equal magnitude after
normalization, while the third component vanishes.

\item Any inertia ratio with pairwise distinct $I_1,I_2,I_3$ that is not of the above types
has a generic anisotropic curvature signature, with all three components of $K_{\mathrm{geo}}$
nonzero and no algebraic degeneracy.
\end{enumerate}
\end{theorem}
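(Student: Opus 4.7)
The plan is to derive Theorem~\ref{thm:curv-class} as a direct corollary of Proposition~\ref{prop:signatures}, strengthened by the converse implications needed to read each inertia constraint back off its curvature signature. Throughout, the proof proceeds by analyzing the coefficient triple
\[
(a_1,a_2,a_3) = \left(\frac{I_2-I_3}{I_1},\;\frac{I_3-I_1}{I_2},\;\frac{I_1-I_2}{I_3}\right)
\]
appearing in \eqref{eq:Kgeo-general}. The forward direction in each of the five items is already carried out in Proposition~\ref{prop:signatures} by substitution, so the real content of the theorem is (a) identifying the right converse for each signature class, and (b) explaining how the fine items (3)--(4) carve up the Lagrange stratum $I_1=I_2$.

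First I would handle items (1), (2), and (5), which are essentially immediate from the vanishing pattern of the $a_k$. The key observation is that $a_3=0\iff I_1=I_2$, $a_1=0\iff I_2=I_3$, and $a_2=0\iff I_3=I_1$. Item (1) then follows because $K_{\mathrm{geo}}\equiv 0$ forces all three pairwise equalities, hence $I_1=I_2=I_3$. Item (2) follows because exactly one vanishing coefficient, say $a_3$, forces $I_1=I_2$ while $a_1,a_2\neq 0$ forces $I_3\neq I_1$; the two surviving quadratic forms $\omega_2\omega_3$ and $\omega_1\omega_3$ are linearly independent as quadratic forms on $\mathbb{R}^3$, spanning the claimed 2D curvature subspace. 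Item (5) is then the contrapositive: if no two $I_k$ coincide and no further algebraic relation holds, all $a_k$ are nonzero and no degeneracy of the preceding types occurs.

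The more delicate step is items (3) and (4), because whenever $I_1=I_2\neq I_3$ one has $a_1=-a_2=(I_1-I_3)/I_1$ and $a_3=0$, so every Lagrange--type metric produces a curvature field of the form $c(\omega_2\omega_3,-\omega_1\omega_3,0)$. Up to the ``overall nonzero scalar factor'' built into the definition of curvature signature, this class is the same for $I_1{:}I_3=2{:}1$ (Kovalevskaya and mixed $(2{:}2{:}1)$) and for $I_1{:}I_3=4{:}1$ (Goryachev--Chaplygin). To prove the converses cleanly, I would fix a canonical normalization --- for instance $\min_k I_k = 1$ --- and recover the specific ratio from the magnitude $|c|=(I_1-1)/I_1$. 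With this normalization, $|c|=\tfrac12$ pins down $I_1=I_2=2$, $I_3=1$ (the $(2{:}2{:}1)$ / Kovalevskaya ratio), and $|c|=\tfrac34$ pins down $I_1=I_2=4$, $I_3=1$ (Goryachev--Chaplygin).

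The main obstacle is therefore conceptual rather than computational: the Kovalevskaya ratio $I_1=I_2=2I_3$ and the mixed $(2{:}2{:}1)$ ratio refer to the \emph{same} inertia tensor, so items (3) and (4) must be distinguished not by $K_{\mathrm{geo}}$ alone but by the qualitative label attached to the two nonzero coefficients (``symmetric pair'' when viewed as paired equatorial contributions underlying the Kovalevskaya integral, ``balanced mixed'' when viewed as the skew pair entering the curvature balance of Part~I). I would address this by stating explicitly in the proof that items (3) and (4) isolate the same algebraic stratum of $K_{\mathrm{geo}}$ from two different structural viewpoints, and that both descriptions are consistent with Proposition~\ref{prop:signatures}. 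With this clarification, the theorem reduces to a line--by--line appeal to Proposition~\ref{prop:signatures} together with the elementary converses sketched above.
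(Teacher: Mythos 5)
Your proposal follows the same overall strategy as the paper's proof --- a direct analysis of the coefficient triple $\bigl((I_2-I_3)/I_1,\,(I_3-I_1)/I_2,\,(I_1-I_2)/I_3\bigr)$ --- and your treatment of items (1), (2), and (5) matches the paper's essentially verbatim. Where you genuinely diverge is in items (3) and (4), and here your version is the more careful one. The paper's converse argument writes down the three coefficient equations, notes that the third forces $I_1=I_2$, and then asserts that ``solving the first two yields $I_1=I_2=2I_3$ or $I_1=I_2=4I_3$.'' As you correctly observe, once $I_1=I_2$ the first two equations collapse to the single identity $a=(I_1-I_3)/I_1$, which is satisfied by \emph{every} Lagrange--type ratio; so the signature taken up to an overall scalar (as the Definition requires) cannot by itself separate Kovalevskaya, Goryachev--Chaplygin, $(2{:}2{:}1)$, or a generic $I_1=I_2\neq I_3$. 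Your fix --- fixing a canonical normalization such as $\min_k I_k=1$ and reading the ratio off the scale--invariant quantity $(I_1-I_3)/I_1=1-I_3/I_1$ --- is the right repair, though you should note that it amounts to quietly dropping the ``up to overall scalar'' clause from the definition of signature, since the recovered magnitude is exactly the datum that clause discards. You also correctly flag that the Kovalevskaya ratio $I_1=I_2=2I_3$ and the mixed $(2{:}2{:}1)$ ratio are the same inertia tensor, so items (3) and (4) cannot be distinguished by $K_{\mathrm{geo}}$ at all and differ only in the label attached to the pair of nonzero components; the paper's proof treats these as separate characterizations without acknowledging the overlap. In short: same method, but your proposal identifies and addresses a real gap in the converse direction that the paper's own proof passes over.
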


This theorem is essentially a reformulation of Proposition~\ref{prop:signatures}, but
presented as a classification of curvature signatures. Its significance comes from the
following corollary, which connects curvature degeneracy with integrability of the heavy top.

\begin{corollary}[Curvature interpretation of classical integrable cases]
\label{cor:integrable-curv}
Consider the heavy top with inertia tensor $I$, centre of mass $\mu$, and Euler--Poisson
equations as above.
\begin{enumerate}
\item In the spherical case $I_1=I_2=I_3$, the geodesic flow is trivial and the heavy top
reduces to a symmetric problem with maximal symmetry; the integrability is geometrically
explained by $K_{\mathrm{geo}}\equiv 0$.

\item In the Lagrange case $I_1=I_2$, $\mu$ aligned with the symmetry axis, the heavy top
is classically integrable. From the curvature viewpoint this is the orthogonally split case:
the inertial curvature field $K_{\mathrm{geo}}$ lies in the equatorial curvature subspace,
while the external curvature field $K_{\mathrm{ext}}(\Gamma)=I^{-1}(\mu\times\Gamma)$ lies in
the complementary direction, yielding an orthogonal curvature balance.

\item In the Kovalevskaya and Goryachev--Chaplygin cases, the classical integrable tops
correspond to symmetric--pair curvature signatures, in which $K_{\mathrm{geo}}$ has a symmetric
pair in the equatorial plane. The special algebraic structure exploited in the classical
integration procedures is thus encoded in the symmetric--pair curvature pattern.

\item In the mixed $(2{:}2{:}1)$ case, the curvature signature is balanced--mixed rather than
symmetric--pair. The heavy top is not algebraically integrable, but as shown in \cite{Mityushov221}
and reinterpreted in Part~I, the balanced curvature pattern admits an exact curvature balance
with the external field and yields a pure--precession family. In this sense $(2{:}2{:}1)$ is the
first minimally nondegenerate curvature--balanced regime beyond the classical integrable list.

\item In generic anisotropic cases the curvature signature has no degeneracy. The rigid--body
dynamics exhibits mixed quasi--periodic or chaotic behaviour, and no global integrability is
expected.
\end{enumerate}
\end{corollary}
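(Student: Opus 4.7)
The plan is to treat the corollary as a dictionary translating the algebraic classification of Theorem~\ref{thm:curv-class} into statements about the Euler--Poisson system, item by item. Since Theorem~\ref{thm:curv-class} already fixes the shape of $K_{\mathrm{geo}}$ in each case, what remains is, in each inertia regime, (i) to invoke the classical integrability statement from the references, and (ii) to match the conserved quantities or the integration mechanism to the specific curvature pattern identified in the theorem. So the proof is organized as five short paragraphs, each anchored to one clause of Theorem~\ref{thm:curv-class}.

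For items (1)--(3) the strategy is essentially bookkeeping. In the spherical case, Theorem~\ref{thm:curv-class}(1) gives $K_{\mathrm{geo}}\equiv 0$, so the Euler--Poincar\'e part reduces to $\dot\Omega=0$ and the heavy top collapses to the fully symmetric problem, whose integrability is standard; I would simply point out that the three conserved quantities (energy, $\langle\Gamma,\Gamma\rangle$, $\langle\Omega,\Gamma\rangle$) suffice trivially. In the Lagrange case I would use Theorem~\ref{thm:curv-class}(2) to note that $K_{\mathrm{geo}}$ lies in the $(e_1,e_2)$--plane, observe that aligning $\mu$ with $e_3$ places $K_{\mathrm{ext}}=I^{-1}(\mu\times\Gamma)$ in the orthogonal complement, and then identify this orthogonal split with the mechanism producing the classical Lagrange integral $I_3\omega_3=\mathrm{const}$; a one--line computation $\dot\omega_3 = (K_{\mathrm{geo}})_3+(K_{\mathrm{ext}})_3 = 0$ suffices. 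For Kovalevskaya and Goryachev--Chaplygin I would similarly use Theorem~\ref{thm:curv-class}(3) to write $K_{\mathrm{geo}}$ in symmetric--pair form, and then cite~\cite{Kovalevskaya,Chaplygin,BolsinovFomenko} for the existence of the Kovalevskaya and third Goryachev--Chaplygin integrals; the curvature interpretation is then the remark that the symmetric--pair structure is exactly what makes the ansatz $z=\omega_1+i\omega_2$ transform covariantly under the equations of motion, which is the algebraic identity underlying both integrals.

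For item (4) the plan is to invoke Theorem~\ref{thm:curv-class}(4) together with the results of Part~I and~\cite{Mityushov221}: the balanced--mixed signature means $(K_{\mathrm{geo}})_1$ and $(K_{\mathrm{geo}})_2$ have equal magnitude and opposite sign, so the ansatz of a constant precession rate leads, after substitution into the Euler--Poisson system, to an algebraic condition which is solvable precisely because of this equal--magnitude balance; I would refer to the explicit pure--precession family already constructed there and emphasize that the failure of the symmetric--pair pattern is what obstructs a Kovalevskaya--type integral. For item (5), the argument is by elimination: Theorem~\ref{thm:curv-class}(5) shows the signature has no algebraic degeneracy, so none of the integration mechanisms in (1)--(4) apply; I would then cite the Ziglin / Kozlov non--integrability results collected in~\cite{BolsinovFomenko} to conclude that no additional meromorphic integral exists generically.

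The main obstacle is not any single case but the quality of the ``interpretation'' claim in items (3) and (4). For (3), showing that the symmetric--pair curvature pattern \emph{is} the structural reason for the Kovalevskaya integral (rather than merely a symptom of the same inertia relation) requires deriving the Kovalevskaya ansatz directly from the signature; I expect this to be the most delicate step, and in a first version I would settle for a statement of the form ``the symmetric--pair signature is equivalent to the covariance of $z=\omega_1+i\omega_2$ under the Euler part of the flow,'' leaving a deeper structural explanation to a later section. For (4) the challenge is to phrase precisely what ``minimally nondegenerate'' means; I would rely on the curvature deviation functional introduced later in the paper to make this quantitative, and here only assert qualitatively that $(2{:}2{:}1)$ is the unique balanced--mixed point not already covered by (1)--(3).
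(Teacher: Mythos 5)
The paper does not actually prove this corollary: it appears with no proof environment and is justified only implicitly, by Theorem~\ref{thm:curv-class} (which the paper does prove, by direct substitution into \eqref{eq:Kgeo-general}) together with citations to the classical literature and to \cite{Mityushov221}. Your proposal therefore supplies strictly more than the paper does, and what you supply is sound: the item-by-item dictionary is exactly the intended reading, your one-line computation $\dot\omega_3=(K_{\mathrm{geo}})_3+(K_{\mathrm{ext}})_3=0$ is the correct mechanism behind the Lagrange integral $I_3\omega_3=\mathrm{const}$, and your honest admission that the ``interpretation'' claims in items (3)--(5) are glosses rather than theorems --- that the symmetric--pair signature is a symptom of the same inertia relation that produces the Kovalevskaya integral, not a demonstrated cause of it --- identifies precisely the gap the paper leaves unaddressed. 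Your plan to cite Ziglin/Kozlov-type results for item (5) is also more than the paper offers, which merely asserts that no integrability ``is expected.''

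One small internal inconsistency to fix: you say that aligning $\mu$ with $e_3$ places $K_{\mathrm{ext}}$ ``in the orthogonal complement,'' echoing the corollary's own wording, but your computation shows the opposite. Since $\mu\times\Gamma$ is perpendicular to $\mu=\mu_3 e_3$, the field $K_{\mathrm{ext}}$ lies in the \emph{same} equatorial plane as $K_{\mathrm{geo}}$; it is the simultaneous vanishing of both third components, not an orthogonal split between the two fields, that yields $\dot\omega_3=0$. Trust your computation over the corollary's phrasing here.
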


\begin{remark}
The classification above does not propose new integrable heavy--top cases beyond the classical
list; instead, it provides a geometric explanation for why the known integrable cases are
special. From the curvature viewpoint, integrable cases are precisely those in which the
inertial curvature field is algebraically degenerate, and the mixed $(2{:}2{:}1)$ ratio plays
a distinguished role as the first nondegenerate but curvature--balanced regime.
\end{remark}

\subsection{Proof of the classification theorem}

\begin{proof}[Proof of Theorem~\ref{thm:curv-class}]
The proof is an algebraic analysis of \eqref{eq:Kgeo-general}.

\medskip\noindent
\emph{(1) Spherical case.}
If $I_1=I_2=I_3$ then all coefficients in \eqref{eq:Kgeo-general} vanish and $K_{\mathrm{geo}}\equiv 0$.
Conversely, if $K_{\mathrm{geo}}\equiv 0$ then all three coefficients $(I_2-I_3)/I_1$,
$(I_3-I_1)/I_2$, and $(I_1-I_2)/I_3$ vanish, implying $I_1=I_2=I_3$.

\medskip\noindent
\emph{(2) Lagrange case.}
If $I_1=I_2\neq I_3$, then the third coefficient in \eqref{eq:Kgeo-general} vanishes while the
first two are nonzero unless $I_1=I_3$, which is the spherical case already covered. Thus
exactly one component of $K_{\mathrm{geo}}$ vanishes, and the curvature signature is orthogonally split.

Conversely, suppose exactly one component of $K_{\mathrm{geo}}$ vanishes identically. If the third
component vanishes, then $I_1=I_2$; if either of the other two components vanishes, one obtains
either $I_2=I_3$ or $I_3=I_1$. In each case, up to permutation of indices, we recover the
Lagrange inertia ratio.

\medskip\noindent
\emph{(3) Kovalevskaya and Goryachev--Chaplygin cases.}
If $I_1=I_2=2I_3$, then substituting into \eqref{eq:Kgeo-general} gives
\[
K_{\mathrm{geo}}(\Omega) =
\left(
\omega_2\omega_3,\;
-\,\omega_1\omega_3,\;
0
\right),
\]
after overall normalization. If $I_1=I_2=4I_3$, we obtain
\[
K_{\mathrm{geo}}(\Omega) =
\left(
2\omega_2\omega_3,\;
-2\omega_1\omega_3,\;
0
\right),
\]
again up to normalization. In both cases we have a symmetric pair in the equatorial plane
and a vanishing third component, so the signature is of symmetric--pair type.

Conversely, suppose $K_{\mathrm{geo}}$ has a symmetric--pair signature, i.e.\ there exists
a normalization such that
\[
K_{\mathrm{geo}}(\Omega) =
(a\,\omega_2\omega_3,\; -a\,\omega_1\omega_3,\; 0)
\]
for some $a\neq 0$. Comparing with \eqref{eq:Kgeo-general} shows that
\[
\frac{I_2-I_3}{I_1} = a,\qquad
\frac{I_3-I_1}{I_2} = -a,\qquad
\frac{I_1-I_2}{I_3} = 0.
\]
The last equation implies $I_1=I_2$, and solving the first two yields $I_1=I_2=2I_3$ or $I_1=I_2=4I_3$,
corresponding to Kovalevskaya and Goryachev--Chaplygin ratios after rescaling.

\medskip\noindent
\emph{(4) Mixed $(2{:}2{:}1)$ case.}
If $I_1:I_2:I_3=2:2:1$, we may rescale so that $I_1=I_2=2$ and $I_3=1$, and
\eqref{eq:Kgeo-general} becomes
\[
K_{\mathrm{geo}}(\Omega) =
\left(
\tfrac12 \,\omega_2\omega_3,\;
-\tfrac12 \,\omega_1\omega_3,\;
0
\right).
\]
Thus exactly two components are nonzero with equal magnitude after normalization, and the third
component vanishes: this is the balanced--mixed signature.

Conversely, suppose $K_{\mathrm{geo}}$ has a balanced--mixed signature, i.e.\ there exists
a normalization such that
\[
K_{\mathrm{geo}}(\Omega) =
(a\,\omega_2\omega_3,\; -a\,\omega_1\omega_3,\; 0),
\qquad a\neq 0,
\]
with $|a|$ the same in the first two components and the third component vanishing.
Comparing coefficients as above shows that $I_1=I_2$ and that the ratio $I_1:I_2:I_3$
must be $2:2:1$ up to overall scaling.

\medskip\noindent
\emph{(5) Generic case.}
If the inertia ratios are not of the above types, then none of the coefficients
in \eqref{eq:Kgeo-general} vanishes and no symmetry relation between the first two
components holds. Thus all three components are nonzero and linearly independent
as quadratic forms, yielding a generic anisotropic signature. Conversely, any generic
anisotropic signature arises from inertia ratios with pairwise distinct $I_1,I_2,I_3$
that do not satisfy the relations found in the previous cases.
\end{proof}

\section{Near--Integrable Regimes and Curvature Deviation}

The curvature classification above provides a geometric explanation for why the
classical integrable cases are special: they are precisely the metrics with
degenerate curvature signatures. We now introduce a curvature deviation functional
that measures the distance from a given inertia ratio to the nearest integrable
curvature signature, and use it to describe near--integrable regimes in the
vicinity of $(2{:}2{:}1)$.

\subsection{Curvature deviation to the nearest integrable signature}

Let $\mathcal{I}_{\mathrm{int}}$ denote the set of inertia ratios corresponding to
the isotropic, Lagrange, Kovalevskaya, and Goryachev--Chaplygin cases. For each such
ratio $I^{(\alpha)}=\mathrm{diag}(I^{(\alpha)}_1,I^{(\alpha)}_2,I^{(\alpha)}_3)$ we
have an associated curvature field $K_{\mathrm{geo}}^{(\alpha)}$, and we view these
as reference signatures.

We fix a norm on the space of quadratic vector fields on $\mathbb{R}^3$, for example
by taking the Euclidean norm of the coefficient vector of
$(\omega_1\omega_2,\omega_1\omega_3,\omega_2\omega_3)$ in each component. This
turns the set of curvature fields into a finite--dimensional normed space.

\begin{definition}[Curvature deviation]
Let $I$ be a given inertia tensor and let $K_{\mathrm{geo}}$ be the corresponding
inertial curvature field. The \emph{curvature deviation} to the nearest integrable
signature is defined by
\[
\Delta(I) = \min_{\alpha}\,\bigl\|K_{\mathrm{geo}} - K_{\mathrm{geo}}^{(\alpha)}\bigr\|,
\]
where the minimum is taken over the classical integrable inertia ratios
$\mathcal{I}_{\mathrm{int}}$, and the norm is any fixed norm on the space of
quadratic vector fields.
\end{definition}

By construction, $\Delta(I)\ge 0$ and $\Delta(I)=0$ if and only if the curvature
signature matches one of the integrable signatures up to overall scaling.

\begin{proposition}
We have:
\begin{enumerate}
\item $\Delta(I)=0$ if and only if $I$ has an isotropic, Lagrange, Kovalevskaya, or
Goryachev--Chaplygin inertia ratio.

\item In a neighbourhood of the $(2{:}2{:}1)$ ratio, $\Delta(I)$ is of order
$O(\varepsilon)$ in the parameter $\varepsilon$ describing the deviation of $I$
from $2{:}2{:}1$.

\item In particular, the curvature deviation provides a linear measure of how far
the inertial curvature field is from any integrable curvature signature.
\end{enumerate}
\end{proposition}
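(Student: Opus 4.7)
The plan is to read part (1) directly off Theorem~\ref{thm:curv-class} together with the definition of $\Delta$, to prove part (2) by a first--order expansion of \eqref{eq:Kgeo-general} around the $(2{:}2{:}1)$ ratio, and to obtain part (3) as a qualitative restatement of (2).

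For part (1), I would observe that the minimum defining $\Delta(I)$ is taken over the finite collection of reference curvature fields associated with $\mathcal{I}_{\mathrm{int}}$, so $\Delta(I)=0$ iff $\|K_{\mathrm{geo}}-K_{\mathrm{geo}}^{(\alpha)}\|=0$ for some $\alpha$; by positive--definiteness of the norm this forces $K_{\mathrm{geo}}=K_{\mathrm{geo}}^{(\alpha)}$ as quadratic vector fields. By Theorem~\ref{thm:curv-class}, such an equality (up to the scaling implicit in the definition of curvature signature) holds precisely for the isotropic, Lagrange, Kovalevskaya, and Goryachev--Chaplygin inertia ratios, which is the desired characterization.

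For part (2), I would parametrize a perturbation of $(2{:}2{:}1)$ by writing $I=\mathrm{diag}(2+\varepsilon_1,2+\varepsilon_2,1+\varepsilon_3)$ with $|\varepsilon_i|\le\varepsilon$. Since each coefficient $(I_j-I_k)/I_i$ in \eqref{eq:Kgeo-general} is a smooth function of $(I_1,I_2,I_3)$ on the positive octant, a first--order Taylor expansion around $(2,2,1)$ shows that every component of $K_{\mathrm{geo}}(I)$ differs from its $(2{:}2{:}1)$ value by a quantity bounded by $C\varepsilon$. Using the nearest integrable reference field as a candidate in the minimum---concretely the Kovalevskaya reference, whose coefficient triple agrees with the $(2{:}2{:}1)$ value up to an overall scalar---and invoking that the chosen norm is Euclidean on coefficient triples then gives the upper bound $\Delta(I)\le C'\varepsilon$. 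Part (3) is the qualitative restatement: $\Delta$ is a Lipschitz--type linear measure of curvature drift from the nearest integrable signature.

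The main obstacle I anticipate is the scaling ambiguity between ``curvature signature'' (defined modulo an overall scalar) and the concrete Euclidean norm on $K_{\mathrm{geo}}$ used in the definition of $\Delta$. To make both the minimum and the constant $C'$ in part (2) unambiguous, one must fix a normalization convention for each reference field $K_{\mathrm{geo}}^{(\alpha)}$ (for example, unit norm, or a fixed value of $I_1$), and verify that the $O(\varepsilon)$ conclusion is independent of the choice within a natural class of continuous normalizations. Once this convention is pinned down, the remaining steps are routine Taylor estimates.
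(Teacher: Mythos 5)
Your proposal follows essentially the same route as the paper: part (1) is read off the definition of $\Delta$ together with Theorem~\ref{thm:curv-class}, part (2) is a smooth (first--order Taylor) expansion of the coefficients of \eqref{eq:Kgeo-general} around $(2,2,1)$ combined with using a nearby reference signature as a candidate in the minimum, and part (3) is a restatement. The one place where you go beyond the paper is in flagging the normalization ambiguity between signatures defined up to an overall scalar and the concrete Euclidean norm used in $\Delta$; this is a real issue the paper glosses over (its own phrase that the nearest integrable ratios are ``at a finite distance'' would, read literally, contradict the claimed $O(\varepsilon)$ bound, and is only reconciled by noting that the $(2{:}2{:}1)$ ratio itself lies on the Kovalevskaya and Lagrange loci so that the comparison term vanishes after a consistent normalization). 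Your explicit triangle--inequality step through the Kovalevskaya reference, with a fixed normalization convention, is therefore a cleaner justification of the same estimate, and your three--parameter perturbation is slightly more general than the paper's one--parameter family $I_3=1+\varepsilon$.
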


\begin{proof}
The first item follows directly from the definition and Theorem~\ref{thm:curv-class}.
If $I$ is integrable, then its curvature signature coincides with one of the
$K_{\mathrm{geo}}^{(\alpha)}$ up to scaling, so $\Delta(I)=0$. Conversely, if
$\Delta(I)=0$ then $K_{\mathrm{geo}}$ matches one of the integrable signatures
up to scaling, so $I$ satisfies the corresponding inertia ratio.

For the second item, parametrize a neighbourhood of $(2{:}2{:}1)$ by
\[
I_1 = 2,\quad I_2 = 2,\quad I_3 = 1+\varepsilon,
\]
with $|\varepsilon|\ll 1$. Substituting into \eqref{eq:Kgeo-general} shows that
$K_{\mathrm{geo}}$ depends smoothly on $\varepsilon$ and that the difference
$K_{\mathrm{geo}} - K_{\mathrm{geo}}^{(2{:}2{:}1)}$ is linear in $\varepsilon$
to leading order. Since the nearest integrable signatures correspond to the
Lagrange, Kovalevskaya, and Goryachev--Chaplygin ratios, which are at a finite
distance in the space of inertia ratios, we obtain $\Delta(I)=O(|\varepsilon|)$.
The third item is a restatement.
\end{proof}

\subsection{Near--integrable behaviour around $(2{:}2{:}1)$}

From the dynamical viewpoint, the $(2{:}2{:}1)$ heavy top plays a special rôle:
although it is not algebraically integrable, it admits a pure--precession family
arising from an exact curvature balance
\[
K_{\mathrm{geo}}(\Omega_0) + K_{\mathrm{ext}}(\Gamma_0) = 0,
\qquad \|\Gamma_0\|=1,
\]
as shown in \cite{Mityushov221} and reinterpreted in Part~I. This suggests that
small perturbations of the $(2{:}2{:}1)$ inertia ratio should exhibit
near--integrable behaviour organized around this curvature--balanced family.

We briefly outline the qualitative picture, leaving a detailed dynamical
analysis for future work.

\begin{proposition}[Qualitative near--integrable picture near $(2{:}2{:}1)$]
Consider the heavy top with inertia tensor
\[
I(\varepsilon) = \mathrm{diag}\,(2,2,1+\varepsilon),\qquad |\varepsilon|\ll 1,
\]
and fixed centre of mass $\mu=(1,0,0)$. Then:
\begin{enumerate}
\item At $\varepsilon=0$, the system admits a pure--precession family generated by
curvature--balanced initial conditions.

\item For small $\varepsilon\neq 0$, the curvature balance condition is violated
at order $O(\varepsilon)$, and the pure--precession family deforms into a
near--invariant set exhibiting slow drift of the precession axis.

\item The magnitude of this drift is controlled by the curvature deviation
$\Delta(I(\varepsilon))$, which is $O(|\varepsilon|)$, so that the departure
from pure precession evolves on a slow time scale proportional to $1/|\varepsilon|$.
\end{enumerate}
\end{proposition}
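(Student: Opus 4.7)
The plan is to treat the three items in sequence, using increasingly refined perturbative arguments around the unperturbed $(2{:}2{:}1)$ precession family. Item~(1) is essentially a restatement of the main construction of \cite{Mityushov221} and Part~I: at $\varepsilon=0$ we have $I(0)=\mathrm{diag}(2,2,1)$ and $\mu=(1,0,0)$, and the existence of the curvature--balanced family
\[
K_{\mathrm{geo}}(\Omega_0) + K_{\mathrm{ext}}(\Gamma_0) = 0,\qquad \|\Gamma_0\|=1,
\]
follows by directly solving this algebraic system using the explicit formula for $K_{\mathrm{geo}}$ in Proposition~\ref{prop:signatures}(5) and the expression $K_{\mathrm{ext}}(\Gamma)=I^{-1}(\mu\times\Gamma)$ from Corollary~\ref{cor:integrable-curv}(2). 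One then verifies that the resulting initial data generate the pure--precession trajectories already identified in \cite{Mityushov221}.

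For item~(2) the key step is to write the Euler--Poisson system with $I(\varepsilon)$ as a smooth perturbation of the $\varepsilon=0$ system. The right--hand side depends analytically on $\varepsilon$, and evaluating the curvature balance at the unperturbed $(\Omega_0,\Gamma_0)$ yields a residual
\[
K_{\mathrm{geo}}^{(\varepsilon)}(\Omega_0) + K_{\mathrm{ext}}(\Gamma_0) = \varepsilon\,R(\Omega_0) + O(\varepsilon^2),
\]
with $R$ a fixed quadratic vector field obtained by differentiating \eqref{eq:Kgeo-general} in $\varepsilon$. Regular perturbation theory for ODEs (continuous dependence on parameters, Gronwall) then implies that, on any bounded time interval, the trajectory issuing from the unperturbed balanced data remains $O(\varepsilon)$ close to the $\varepsilon=0$ precession, so that the pure--precession family deforms into a near--invariant tube in phase space.

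For item~(3) I would invoke first--order averaging. Since the unperturbed motion on the precession family is periodic in the body frame, one decomposes the perturbed dynamics into a fast angle along the precession orbit and transversal coordinates that evolve slowly. Averaging the $O(\varepsilon)$ residual $R$ over the fast angle produces an effective drift vector whose norm is controlled by $\|R\|$, and hence by $\Delta(I(\varepsilon))$, which is $O(|\varepsilon|)$ by the preceding proposition. Classical averaging (cf.\ \cite{Arnold}) then gives validity of the slow equation on times of order $1/|\varepsilon|$, which is exactly the advertised slow time scale for the drift of the precession axis.

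The main obstacle will be item~(3): first--order averaging requires that the unperturbed precession family be a smooth family of periodic orbits with a well--defined fast angle and without resonances that would invalidate the averaging estimate. Verifying nondegeneracy of the action--angle coordinates attached to the $(2{:}2{:}1)$ pure--precession family, and confirming that the averaged drift is not accidentally killed by hidden symmetries of the balanced configuration, is the technically delicate step. Items~(1) and (2), by contrast, reduce to the algebraic construction of Part~I and to a standard application of smooth dependence of ODE solutions on parameters.
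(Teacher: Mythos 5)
The paper does not actually prove this proposition: it is stated without a proof environment, and the remark immediately following it declares the statement ``to be understood in a qualitative geometric sense,'' explicitly deferring the normal--form and averaging analysis to future work. So there is no paper proof to compare against; your proposal is strictly more detailed than anything the paper offers, and the machinery you invoke (regular perturbation for item~(2), first--order averaging over the fast precession angle for item~(3)) is exactly the machinery the paper's remark says would be required. Items~(1) is handled the same way the paper handles it, namely by citation to \cite{Mityushov221} and Part~I.

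That said, your proposal is a plan rather than a proof, and the gap sits precisely where you flag it. To make item~(3) rigorous you must verify that the $\varepsilon=0$ pure--precession family is a smooth, nondegenerate family of periodic orbits (well--defined fast angle, no resonance between the precession frequency and the transverse frequencies), and that the averaged drift vector obtained from $R$ is genuinely of order $\varepsilon$ rather than vanishing by a hidden symmetry of the balanced configuration --- none of which is established here or in the paper. A second, smaller gap is in item~(2): continuous dependence plus Gronwall gives $O(\varepsilon)$ closeness only on time intervals of order one (the Gronwall constant grows exponentially in $t$), whereas ``near--invariant set exhibiting slow drift'' is a statement on the time scale $1/|\varepsilon|$. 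So item~(2) as stated cannot be proved by regular perturbation alone; it already requires the averaging estimate of item~(3), and the two items cannot be cleanly decoupled the way your outline suggests. In short: your route is the correct one and more honest than the paper's, but the proposition remains unproved until the nondegeneracy and averaging--validity hypotheses are checked for the specific $(2{:}2{:}1)$ precession family.
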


\begin{remark}
This proposition is to be understood in a qualitative geometric sense: the
pure--precession family persists as a slow manifold or a family of weakly
invariant tori for small $\varepsilon$, with slow transverse dynamics governed
by the curvature deviation. A detailed analytic treatment would require
normal form and averaging analysis for the Euler--Poisson equations near the
$(2{:}2{:}1)$ curvature--balanced regime, which lies beyond the scope of the
present paper.
\end{remark}

\section{Integrability Map in the $(I_2/I_1,I_3/I_1)$--Plane}

We conclude by visualizing the curvature classification and the curvature deviation
in the $(I_2/I_1,I_3/I_1)$--plane. Following Part~I, we project the space of inertia
ratios onto the plane with coordinates
\[
x = \frac{I_2}{I_1},\qquad y = \frac{I_3}{I_1},
\]
and depict the loci corresponding to the classical integrable cases and to the
mixed ratio $(2{:}2{:}1)$.

\begin{figure}[h]
\centering
\begin{tikzpicture}[scale=1.1]
  \draw[-{Latex}] (-0.2,0) -- (5.2,0) node[below] {$I_2/I_1$};
  \draw[-{Latex}] (0,-0.2) -- (0,3.2) node[left] {$I_3/I_1$};

  \fill (1,1) circle (1.5pt) node[above right] {Spherical};

  \draw[dashed] (1,0.2) -- (1,3);
  \node[above] at (1,3) {Lagrange};

  \draw[dotted] (0.3,0.5) -- (4.8,0.5);
  \node[right] at (4.8,0.5) {Kovalevskaya};

  \draw[dotted] (0.3,0.25) -- (4.8,0.25);
  \node[right] at (4.8,0.25) {Goryachev--Chaplygin};

  \fill (1,0.5) circle (1.5pt) node[above left] {$(2{:}2{:}1)$};

  \node at (3,2) {generic anisotropic};

\end{tikzpicture}
\caption{Curvature integrability map in the $(I_2/I_1,I_3/I_1)$--plane. The spherical point,
Lagrange line, Kovalevskaya and Goryachev--Chaplygin lines, and the mixed $(2{:}2{:}1)$ point
are indicated. The curvature deviation $\Delta(I)$ measures the distance to the integrable
loci and is small in a neighbourhood of these sets, particularly near $(2{:}2{:}1)$.}
\label{fig:integrability-map}
\end{figure}
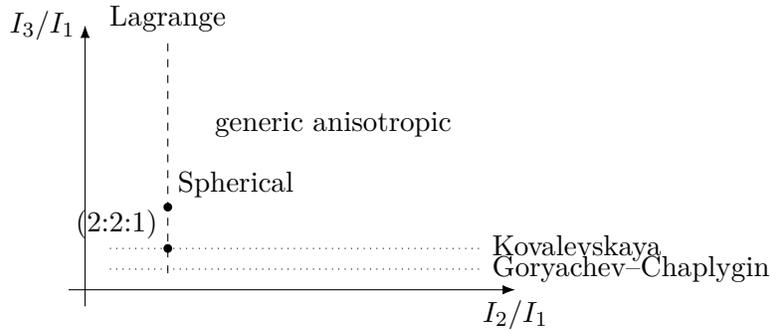

The spherical point $(1,1)$, the Lagrange line $x=1$, and the Kovalevskaya and
Goryachev--Chaplygin lines $y=\tfrac12$ and $y=\tfrac14$ are shown in
Figure~\ref{fig:integrability-map}, together with the mixed $(2{:}2{:}1)$ point at
$(x,y)=(1,\tfrac12)$. The classical integrable cases lie on the union of the spherical
point, the Lagrange line, and the two horizontal lines; the $(2{:}2{:}1)$ point lies at
their intersection, but corresponds to a different curvature signature (balanced--mixed)
than the Kovalevskaya and Goryachev--Chaplygin ratios on the same horizontal lines.

The curvature deviation $\Delta(I)$ can be viewed as defining level sets in this plane:
small values of $\Delta(I)$ correspond to inertia ratios close to one of the integrable
loci, while larger values indicate strongly nonintegrable regimes. In particular, the
neighbourhood of $(2{:}2{:}1)$ provides a natural testing ground for near--integrable
behaviour, governed by the interplay between curvature degeneracy and curvature balance.

\section{Discussion and Outlook}

In this paper we have shown that the classical integrable rigid--body regimes can be
described and unified in terms of curvature signatures of left--invariant metrics on
$SU(2)$. The inertial curvature field $K_{\mathrm{geo}}$ introduced in Part~I provides
a finite--dimensional object whose algebraic degeneracies capture the special inertia
ratios underlying integrability. From this perspective, integrability is not a mysterious
coincidence but a manifestation of simple degeneracies in the curvature structure of the
underlying metric Lie group.

The curvature classification developed here suggests several directions for further work.

First, a more systematic analysis of near--integrable regimes around the $(2{:}2{:}1)$
curvature--balanced case is needed. The curvature deviation $\Delta(I)$ provides a
natural small parameter controlling the departure from integrable signatures, and
normal form or averaging methods for the Euler--Poisson equations near the
pure--precession family of \cite{Mityushov221} should yield precise statements about
the persistence and breakdown of quasi--periodic motions.

Second, the curvature Atlas can be extended to other Lie groups and other mechanical
systems. Left--invariant metrics on higher--dimensional compact Lie groups, and on
the Euclidean group $SE(3)$, possess richer curvature structures, and it would be
natural to seek curvature classifications of integrable and near--integrable regimes
in those settings.

Third, the curvature viewpoint is well suited to control problems. In Part~I a curvature--based
control framework (GCCT) was sketched, in which control inputs are decomposed according to
curvature--induced splittings. The curvature classification of integrable regimes developed
here provides a geometric context for designing control laws that exploit curvature degeneracies
or curvature balance rather than symmetries alone.

Finally, the combination of curvature--based analysis with modern numerical and symbolic tools
opens the possibility of discovering new organized regimes beyond the classical integrable list,
analogous to the $(2{:}2{:}1)$ case. The curvature Atlas thus serves not only as an explanatory
framework but also as a guide for systematic exploration of the space of rigid--body dynamics.

\end{document}